\newcommand{\<}{\langle}
\renewcommand{\>}{\rangle}
\newcommand{\agents}{{\cal A}}
\newcommand{\bigO}[1]{{\cal O}{#1}}
\newcommand{\likes}[2]{l(j,i)}
\newcommand{\lstrat}[1]{\lambda_{#1}}
\newcommand{\strat}{s}
\newcommand{\strats}[1]{strat(#1)}
\newcommand{\vote}[2]{V_{#1}(#2)}
\newcommand{\votep}[1]{P(#1)}
\newcommand{\kvote}[3]{V_{#1}(#2,#3)}
\newcommand{\kvotep}[2]{P(#1,#2)}
\newcommand{\Nat}{{\mathbb N}}
\newcommand{\acro}[1]{\textsc{#1}}
\newcommand{\act}{\mathbb{A}}
\newcommand{\catld}[1]{\langle\!\langle #1\rangle\!\rangle}
\newcommand{\roles}{\mathcal{R}}
\newcommand{\exptime}{$\mathsf{EXPTIME}$}
\newtheorem{definition}{Definition}
\newtheorem{example}{Example}
\newtheorem{theorem}{Theorem}
\newtheorem{lemma}{Lemma}
\newtheorem{proposition}{Proposition}
\title{Concurrent Game Structures with Roles\thanks{A preliminary
    version of this paper was presented during LAMAS workshop held at
    AAMAS on June 4th 2012, and a~talk based on that version was
    given at LBP workshop during ESSLLI summer school, August 2012. It
    is available on arXiv \cite{DyrKazPar12-0}.}}
\author{Truls Pedersen\thanks{Dept. of Information Science and Media
    Studies, University of Bergen, Norway. \url{truls.pedersen@infomedia.uib.no}}
  \and
  Sjur Dyrkolbotn\thanks{Durham Law School, Durham University, UK. \url{s.k.dyrkolbotn@durham.ac.uk}}
  \and
  Piotr Kaźmierczak\thanks{Dept. of Computing, Mathematics and
    Physics, Bergen University College, Norway. \url{phk@hib.no}}
  \and
  Erik Parmann\thanks{Dept. of Informatics, University Bergen,
    Norway. \url{erik.parmann@ii.uib.no}}
}
\begin{document}
\maketitle

\begin{abstract}
  In the following paper we present a~new semantics for the well-known
  strategic logic \acro{atl}. It is based on adding \emph{roles} to
  concurrent game structures, that is at every state, each agent
  belongs to exactly one role, and the role specifies what actions are
  available to him at that state. We show advantages of the new
  semantics, provide motivating examples based on sensor networks,
  and analyze model checking complexity.
\end{abstract}
 
\section{Introduction}

\acro{atl} \cite{alur2002alternating} is not only a~highly-expressive and
powerful strategic logic, but also has a~relatively low (polynomial) model
checking complexity. However, as investigated by Jamroga and Dix
\cite{JamDix05-0}, in order for the complexity to be polynomial, the number of
agents must be \emph{fixed}. If the number of agents is taken as a~parameter,
model checking \acro{atl} is $\Delta^{\mathsf{P}}_{2}$-complete or
$\Delta^{\mathsf{P}}_{3}$-complete depending on model representation
\cite{Laroussinie07onthe}.
Also, van der Hoek, Lomuscio and Wooldridge show in \cite{van2006complexity} that the
complexity of model checking is polynomial only if an \emph{explicit}
enumeration of \emph{all} components of the model is assumed. For models
represented in \emph{reactive modules language} (\acro{rml}) complexity of model
checking for \acro{atl} becomes as hard as the satisfiability problem for this
logic, namely \exptime~\cite{van2006complexity}.

We present an alternative semantics that interprets formulas of ordinary
\acro{atl} over concurrent game structures with \emph{roles}. Such structures
introduce an extra element -- a set $R$ of roles and associates each agent with
exactly one role which are considered \emph{homogeneous} in the sense that all 
consequences of the actions of the agents belonging to the topical role is 
captured by considering only the number of ``votes'' an action gets (one vote 
per agent).

We present the revised formalism for \acro{atl} in Section \ref{sec:ratl},
discuss model checking results in Section \ref{sec:mc} and conclude in Section
\ref{sec:concl}.

\section{Role-based semantics for ATL}
\label{sec:ratl}
In this section we will introduce \emph{concurrent game structures
  with roles} (\acro{rcgs}), illustrate them with an example and show in Theorem 
\ref{thm:basicthm} that treating \acro{rcgs} or \acro{cgs} as the semantics 
of \acro{atl} are equivalent.

We will very 
often refer to sets of natural numbers from $1$ to some number $n \geq 1$. To 
simplify the reference to such sets we introduce the notation $[n] = \{1,\ldots,n\}$. 
Furthermore we will let $A^B$
denote the set of functions from $B$ to $A$. We will often also work with
tuples $v = \<v_1,\ldots,v_n\>$ and view $v$ as a
function with domain $[n]$ and write $v(i)$ for $v_i$. Given a
function $f: A \times B \to C$ and $a \in A$, we will use $f_a$ to
denote the function $B \to C$ defined by $f_a(b) = f(a,b)$ for all $b
\in B$.

\begin{definition}
  An \acro{rcgs} is a tuple $H = \langle \agents, R, \roles, Q, \Pi,
  \pi, \act, \delta\rangle$ where:
  \begin{itemize}
  \item $\agents$ is a non-empty set of players. In this text we
    assume $\agents = [n]$ for some $n \in \Nat$, and we reserve
    $n$ to mean the number of agents.
  \item $Q$ is the non-empty set of states.
  \item $R$ is a non-empty set of roles. In this text we assume $R
    =[i]$ for some $i \in \Nat$.
  \item $\roles: Q \times \agents \to R$. For a coalition $A$ we write $A_{r,q}$ to denote the
    agents in $A$ which belong to role $r$ at $q$, and notably $\agents_{r,q}$
    are \emph{all} the agents in role $r$ at $q$.
  \item $\Pi$ is a set of propositional letters and $\pi: Q \to
    \wp(\Pi)$ maps each state to the set of propositions true in it.
  \item $\act : Q \times R \to \mathbb N^+$ is the number of available
    actions in a given state for a given role.
  \item For $\mathcal{A}=[n]$, we say that the set of
    \emph{complete} votes for a role $r$ in a state $q$ is $\vote r q
    = \{v_{r,q} \in [n]^{[\act(q,r)]} \mid \sum_{1 \leq a \leq
      \act(q,r)} v_{r,q}(a) = |\mathcal \agents_{r,q}|\}$, the set of
    functions from the available actions to the number of agents
    performing the action. The functions in this set account for the
    actions of \emph{all} the agents. The set of \emph{complete}
    profiles at $q$ is $\votep q = \prod_{r \in R}\vote r q$. For each
    $q \in Q$ we have a transition function at $q$, $\delta_q: \votep
    q \to Q$ defining a partial function $\delta: Q \times \bigcup_{q
      \in Q}\votep q \to Q$ such that for all $q \in Q$, $P \in \votep
    q$, $\delta(q,P) = \delta_q(P)$.
  \end{itemize}
\end{definition}
The following example illustrates how \acro{rcgs} differs from an ordinary concurrent game
structure:

\begin{example}[Sensor networks]
  \label{ex1}
  A~wireless sensor network is
  a~system composed of a~number of (homogeneous) sensors that can be
  triggered by various stimuli. In
  Figure~\ref{fig:1tier} we show a~\emph{1-tier} (i.e.,\ completely
  homogeneous) sensor network with $n$ sensors. There are two states
  in the system with labels corresponding to an \emph{indicator} of
  the network. $\neg p$ stands for \emph{idle} state of the network,
  while $p$ indicates that the network detected
  a~stimulus. In this very simple example we say that $k$ is our
  \emph{threshold}, i.e.\ if at least $k$ number of sensors detect
  something, then $p$. Since all the sensors behave in the same way we say 
  the role of sensors is homogeneous. Hence the
  system can be modeled using only a single role.  This gives us the model
  depicted in Figure~\ref{fig:1tier}. One can easily add another role to the
  model if needed, for example in a scenario with a ``controller''
  who processes the reported signals, or in a $2$-tier network with several
  types of sensors.
  \begin{figure}[ht]
    \begin{center}
      \includegraphics[scale=.4]{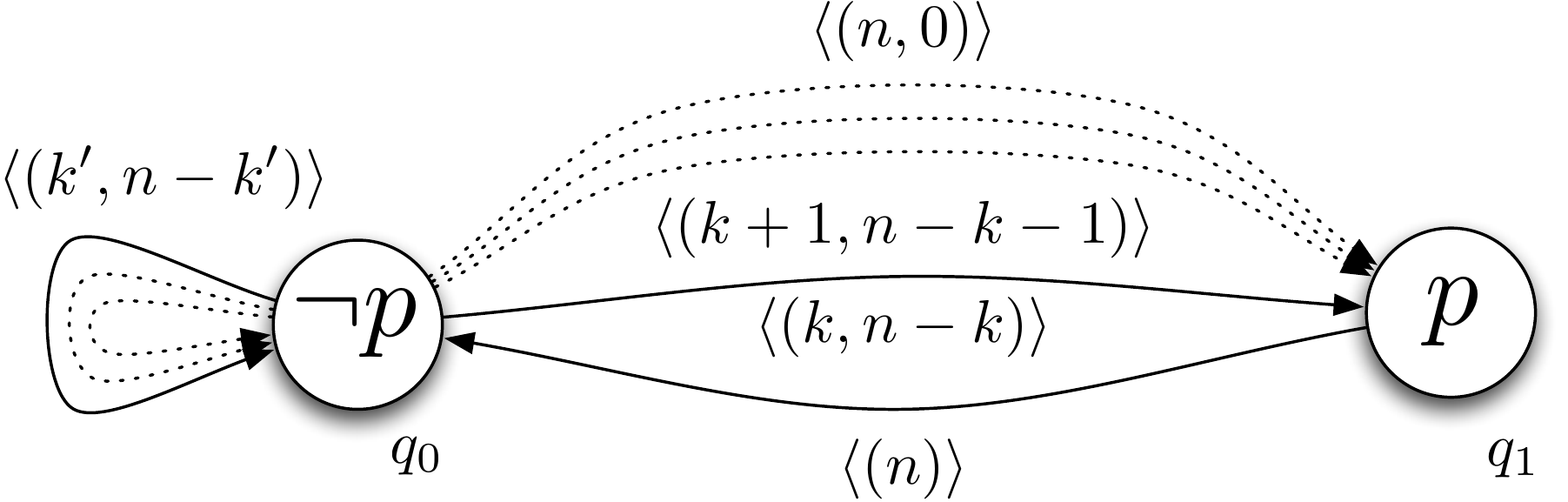}
    \end{center}
    \caption{A depiction of $H_{1}$ -- a~simple 1-tier sensor network.}
    \label{fig:1tier}
  \end{figure}

  A~more complex example is presented in Figure
  \ref{fig:1tiercontroller}, where we add another \emph{role} to our
  structure, that of a~\emph{supervisor} or \emph{controller}. The
  supervisor can act upon sensors' actions, i.e.\ if the sensors report 
  that $p$, the supervisor can perform $q$. As illustrated by the
  drawing, the supervisor has three actions available: he can wait, he
  can reject the message or he can accept the message and proceed to
  state $q_{2}$ performing $q$ (e.g.,\ call the police in an intrusion
  detection scenario).
  \begin{figure}[h]
    \begin{center}
      \includegraphics[scale=.4]{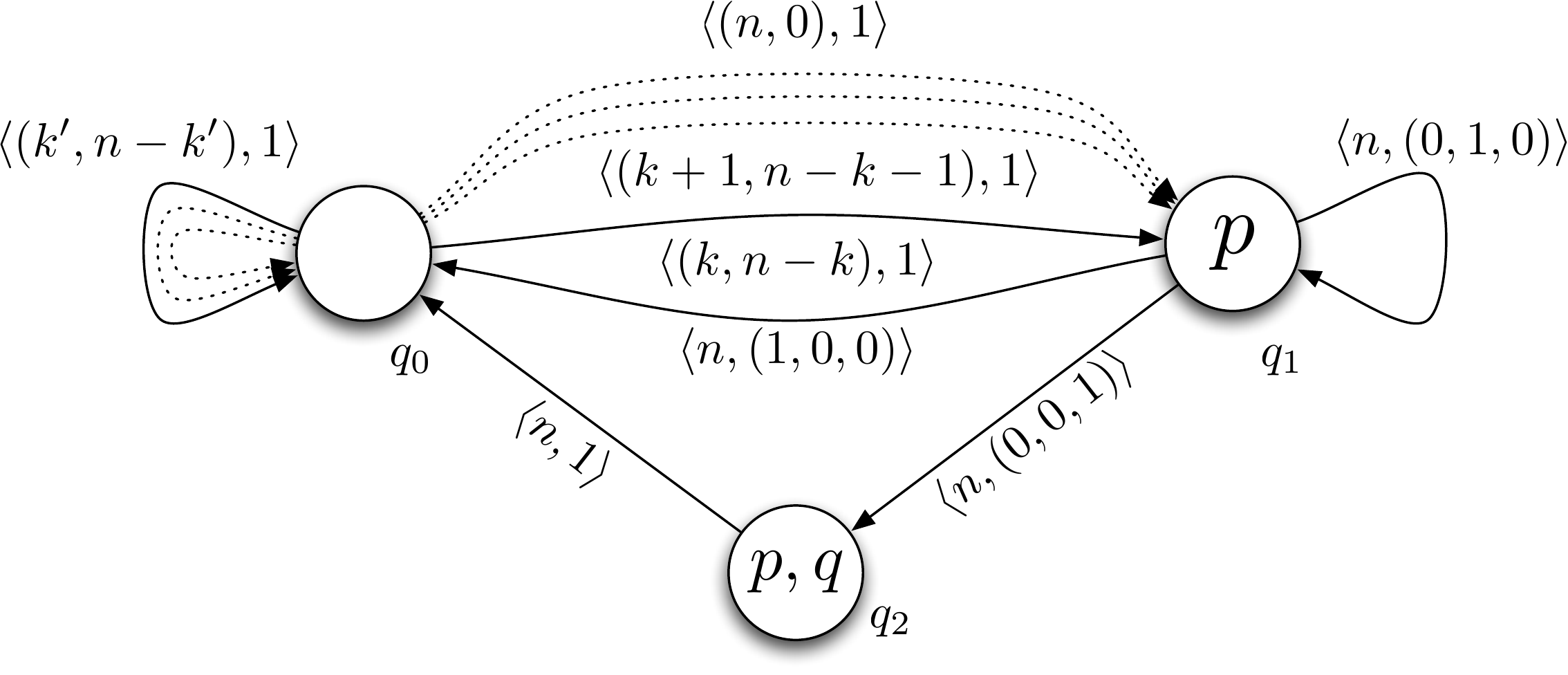}
    \end{center}
    \caption{A sketch of structure $H_{2}$: a~1-tier sensor network with a~supervisor.}
    \label{fig:1tiercontroller}
  \end{figure}
  Finally, in Figure \ref{fig:multitier} we sketch a~\emph{multi-tier}
  example, with two different types of sensors, $n_{1}$ and $n_{2}$,
  each type with its corresponding role. The transition function with
  the addition of a~new role looks like this:
  \[
  \delta_{q_{0}} (\langle (x_{1},
  n_{1}-x_{1}),(x_{2},n_{2}-x_{2})\rangle) = \begin{cases} q_{1}, &
    x_{1} \geq t_{1} \land x_{2} \geq t_{2}\\ q_{0}, &
    \text{otherwise}\end{cases}
  \]
  \[
  \delta_{q_{1}} (\langle n_{1},n_{2}\rangle) = q_{0}
  \]
  where $t_{1}$,$t_{2}$ are thresholds set according to
  \emph{significance} of the sensors.
  \begin{figure}[h]
    \begin{center}
      \includegraphics[scale=.4]{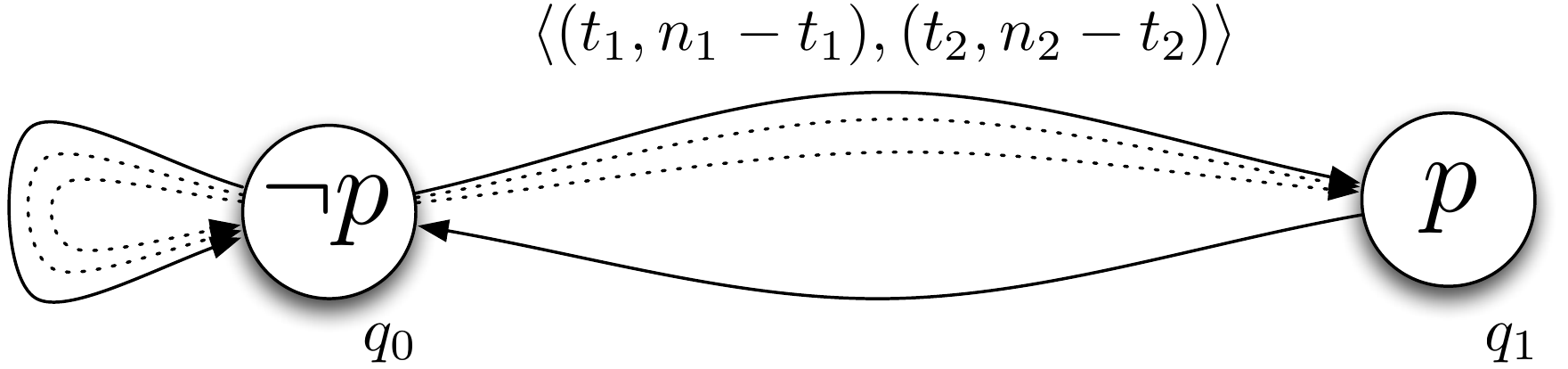}
    \end{center}
    \caption{A sketch of structure $H_{3}$: a~multi-tier sensor network example.}
    \label{fig:multitier}
  \end{figure}

  These simple structures show the benefit of using roles when
  modelling scenarios which involve a~high degree of homogeneity among
  agents. In this simplified sensor setting a~sensor either signals
  that he has made a~relevant observation or he does not -- a binary
  choice. If modelled using concurrent game structures \emph{without}
  roles, models would have $2^{n}$ number of possible action profiles
  in state $q_{0}$, since the identity of the agents signaling that
  they have made an observation has to be accounted for. This,
  however, is irrelevant for the high-level protocol -- all
  that matters is how many sensors of a~given type signal that they
  have made an observation. With roles we can exploit this, and
  we only need to account for the genuinely different scenarios that
  can occur -- corresponding to the number of sensors of each type
  that decide to signal that they have made an observation. In the
  case of just a~single role, this means that we get $n$ as opposed to
  $2^n$ number of different profiles, and the size of the model goes
  from exponential to linear in the number of sensors. In general, as
  we will show in Section \ref{sec:mc}, we shift the exponential
  dependence in the size of models from the number of agents to the
  number of roles.
\end{example}

Given a role $r$, a state $q$ and a coalition $A$, the set of $A$-votes for
$r$ at $q$ is $\kvote r q A$, defined as: $$\kvote r q A = \left\lbrace v \in
  [|A_{r,q}|]^{[\act(q,r)]} ~\left|~ \sum_{a \in [\act(q,r)]} v(a) = |A_{r,q}|\right.\right\rbrace.$$
The $A$-votes for $r$ at $q$ give the possible ways agents in $A$
that are in role $r$ at $q$ can vote. Given a state $q$ and a
coalition $A$, we define the set of $A$-profiles at $q$:
$$
\kvotep q A = \{\langle v_1, \ldots , v_{|R|}\rangle \mid 1 \leq i
\leq |R| : v_i \in \kvote r q A\}.
$$
For any $v \in \kvote r q A$ and $w \in \kvote r q B$ we
write $v \leq w$ iff for all $i \in [\act(q,r)]$ we have $v(i) \leq
w(i)$. If $v \leq w$, we say that $w$ \emph{extends} $v$. If $F =
\langle v_1, \ldots, v_R \rangle \in \kvotep q A$ and $F' = \langle
v_1', \ldots, v_R'\rangle \in \kvotep q B$ with $v_i \leq v_i'$ for
every $1 \leq i \leq |R|$, we say that $F \leq F'$ and that $F$
extends $F'$.
Given a (partial) profile $F'$ at a state $q$ we
write $ext(q,F)$ for the set of all complete profiles that extend
$F'$.

Given two states $q,q' \in Q$, we say that $q'$ is a \emph{successor}
of $q$ if there is some $F \in \votep q$ such that $\delta(q,F) =
q'$. A \emph{computation} is an infinite sequence $\lambda =
q_0q_1\ldots$ of states such that for all positions $i\geq 0$,
$q_{i+1}$ is a successor of $q_i$. We follow standard
abbreviations, hence a $q$-computation denotes a computation starting at
$q$, and $\lambda[i]$, $\lambda[0,i]$ and $\lambda[i,\infty]$ denote
the $i$-th state, the finite prefix $q_0q_1\ldots q_i$ and the
infinite suffix $q_iq_{i+1}\ldots$ of $\lambda$ for any computation
$\lambda$ and its position $i\geq 0$, respectively. An \emph{$A$-strategy} for $A
\subseteq \agents$ is a function $s_A: Q \to \bigcup_{q \in Q}\kvotep
q A$ such that $s_A(q) \in \kvotep q A$ for all $q \in Q$. That is,
$s_A$ maps states to $A$-profiles at that state. The set of all
$A$-strategies is denoted by $\strats A$. 
When needed to distinguish between different
structures we write $\strats {S,A}$ to indicate that we are talking
about the set of strategies for $A$ in structure $S$.
If $s$ is an
$\agents$-strategy and we apply $\delta_{q}$ to $s(q)$, we obtain a
unique new state $q' = \delta_q(\strat(q))$. Iterating, we get the
\emph{induced} computation $\lambda_{s,q} = q_{0}q_{1}\ldots$ such
that $q = q_0$ and $\forall i~\geq 0: \delta_{q_{i}} (s(q_i)) =
q_{i+1}$. Given two strategies $s$ and $s'$, we say that $s \leq s'$
iff $\forall q \in Q: s(q) \leq s'(q)$. Given an $A$-strategy $s_A$
and a state $q$ we get an associated \emph{set} of computations
$out(s_A,q)$. This is the set of all computations that can result when
at any state, the players in $A$ are voting/acting in
the way specified by $s_A$, that is $out(s_A,q) = \{\lstrat {\strat,q} \mid \strat \textit{ is an}\
\mathcal{A}\textit{-strategy and } \strat \geq s_A\}.$ 


Given the definitions above, we can interpret \acro{atl} formulas in the following manner, leaving
out the propositional cases and abbreviations: 
\begin{definition}\label{def:basicsem} 
  Given a \acro{rcgs} $S$ and a state $q$ in $S$, we define the
  satisfaction relation $\models$ inductively:
  \begin{itemize}
  \item $S,q \models \catld{A}\bigcirc\phi$ iff there is $s_A \in
    \strats A$ such that for all $\lambda \in out(s_A,q)$, we have $S,
    \lambda[1] \models \phi$
    \item $S,q \models \catld{A}\phi\mathcal{U}\phi'$ iff there is $s_A
    \in \strats A$ such that for all $\lambda \in out(s_A,q)$ we have
    $S, \lambda[i] \models \phi'$ and $S, \lambda[j] \models \phi$ for
    some $i \geq 0$ and for all $0 \leq j < i$
  \end{itemize}
\end{definition}

Towards the statement that interpreting formulas over \acro{cgs} and \acro{rcgs} 
is equivalent (Theorem \ref{thm:basicthm}) we will describe a surjective translation 
function $f$ translating each \acro{rcgs} to a \acro{cgs}. The following two lemmas 
will be useful in formulating the proof of Theorem \ref{thm:basicthm}.

The translation function $f$ from \acro{rcgs} to \acro{cgs} is defined
as follows:

$$ f \langle \agents, R, \roles, Q, \Pi, \pi, \act, \delta\rangle = \langle \agents, Q, \Pi, \pi, d, \delta' \rangle  $$
where:
\begin{align*}
  d_a(q) &= \act(q, r) & \text{where } a \in \roles(q, r) \\
  \delta'(q, \alpha_1, \ldots, \alpha_n) &= \delta(q, v_1, \dots, v_{|R|}) &\text{where for each role $r$} \\
\end{align*}
\vspace{-3.28em}
\[
v_r = \langle |\{i \in \roles(q, r)~|~ \alpha_{i} = 1\}|, \ldots ,
|\{i \in \roles(q, r)~|~ \alpha_{i} = \act(q,r)\}| \rangle
\]

We describe a surjective function $m : \strats {f(S)} \to \strats S$
mapping action tuples and strategies of $f(S)$ to profiles and
strategies of $S$ respectively. For all $A \subseteq \agents$ and any
action tuple for $A$ at $q$, $t_q =
\<\alpha_{a_1},\alpha_{a_2},...,\alpha_{a_{|A|}}\>$ with $1 \leq
\alpha_{a_i} \leq d_{a_i}(q)$ for all $1 \leq i \leq |A|$, the
$A$-profile $m(t_q)$ is defined in the following way:
\begin{align*}
  m(t_q) &= \<v(t_q,1),\ldots,v(t_q,|R|)\> \text{ where for all } 1 \leq r \leq |R| \text{ we have } \\
  v(t_q,r) &= \<|\{a \in A_{r,q} \mid \alpha_a = 1\}|,\ldots,|\{a \in
  A_{r,q} \mid \alpha_a =
  \act(q,r)\}|\> 
\end{align*}
\begin{lemma}
  \label{lemma:surj}
  For any \acro{rcgs} $S$ and any $A \subseteq \agents$, the function $m:
  strat(f(S), A) \to strat(S, A)$ is surjective.
\end{lemma}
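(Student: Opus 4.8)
The plan is to reduce surjectivity of $m$ on strategies to surjectivity of $m$ on $A$-profiles at a single state, and then to realize any target profile by explicitly distributing the agents of each role among the actions available to that role.

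First I would observe that $m$ acts pointwise on strategies: for $\sigma \in \strats{f(S)}$ and every state $q$, the induced strategy satisfies $m(\sigma)(q) = m(\sigma(q))$, so $m(\sigma)$ is determined state-by-state by applying the action-tuple version of $m$ to the tuple $\sigma(q)$. Consequently, given a target $A$-strategy $s_A \in \strats S$, it suffices to find, for each $q \in Q$, an action tuple $t_q$ for $A$ at $q$ with $m(t_q) = s_A(q)$; setting $\sigma(q) = t_q$ then yields $m(\sigma) = s_A$. This decouples the problem across states and removes any apparent difficulty arising from strategies being functions over all of $Q$.

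The core step is therefore the following. Fix $q$ and write $s_A(q) = \langle v_1,\ldots,v_{|R|}\rangle \in \kvotep q A$, so that each $v_r \in \kvote r q A$ is a function $[\act(q,r)] \to \Nat$ with $\sum_{j \in [\act(q,r)]} v_r(j) = |A_{r,q}|$. For each role $r$ I would enumerate the agents of $A_{r,q}$ and cut this list into consecutive blocks of sizes $v_r(1),\ldots,v_r(\act(q,r))$, assigning action $j$ to every agent in the $j$-th block. The defining sum condition on $v_r$ guarantees that these blocks tile $A_{r,q}$ exactly, so each agent of $A_{r,q}$ receives precisely one action. Since $\roles$ assigns each agent a unique role at $q$, the sets $A_{r,q}$ partition $A$, so ranging over all roles assigns an action $\alpha_a \in [d_a(q)]$ to every $a \in A$, giving a well-defined action tuple $t_q$. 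By construction, for each role $r$ and each $j \in [\act(q,r)]$ the number $|\{a \in A_{r,q} \mid \alpha_a = j\}|$ equals $v_r(j)$, which is exactly the $r$-th component of $m(t_q)$; hence $m(t_q) = s_A(q)$.

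The argument involves no genuine obstacle: the only point requiring care is the bookkeeping that the constraint $\sum_j v_r(j) = |A_{r,q}|$ makes the block decomposition an honest partition of $A_{r,q}$, so that the assignment is total and the counts are recovered exactly. Everything else is a direct unwinding of the definitions of $m$, $\kvote r q A$, and $\kvotep q A$.
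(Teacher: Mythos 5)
Your proposal is correct and follows essentially the same route as the paper's own proof: fix the target strategy, work state-by-state, and construct an explicit action assignment $\alpha$ whose fibre counts over each role recover the given votes $v_r$. The only difference is cosmetic --- where the paper simply asserts that such an $\alpha$ exists (justified by the sum condition $\sum_j v_r(j) = |A_{r,q}|$ and the fact that $d_a(q) = \act(q,r)$ for all $a \in A_{r,q}$), you make the witness explicit via the block decomposition, which is a fine way to discharge the same step.
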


\begin{proof}
  Let $p_A$ be some strategy for $A$ in $S$. We must show there is a
  strategy $s_A$ in $f(S)$ such that $m(s_A) = p_A$.  For all $q \in
  Q$, we must define $s_A(q)$ appropriately. Consider the profile
  $p_A(q) = \<v_1,\ldots,v_{|R|}\>$ and note that by definition of a
  profile, all $v_r$ for $1 \leq r \leq |R|$ are $A$-votes for $r$ and
  that by definition of an $A$-vote, we have $\sum_{1 \leq i \leq
    \act(q,r)}v_r(i) = |A_{r,q}|$. Also, for all agents $a,a' \in
  A_{r,q}$ we know, by definition of $f$, that $d_a(q) = d_{a'}(q) =
  \act(q,r)$.

  It follows that there are functions $\alpha: A \to \mathbb
  N^+$ such that for all $a \in A$, $\alpha(a) \in [d_a(q)]$ and $|\{a
  \in A_{r,q} \mid \alpha(a) = i\}| = v_r(i)$ for all $1 \leq i \leq
  \act(q,r)$, i.e.
$$v_r = \< |\{a \in A_{r,q} | \alpha (a) = 1\}|, \dots, 
|\{a \in A_{r,q} | \alpha (a) = \act(q,r)\}| \>$$ We choose some such
$\alpha$ and $s_A = \<\alpha(a_1), \dots, \alpha(a_{|A|})\>$. Having
defined $s_A$ in this way, it is clear that $m(s_A) = p_A$.
\end{proof}

It will be
useful to have access to the set of states that can result in the next
step when $A \subseteq \agents$ follows strategy $s_A$ at state $q$,
$succ(q,s_A) = \{q' \in Q \mid \exists F \in ext(q,s_A): \delta(q,F) =
q'\}$.
Given either a~\acro{cgs} or an \acro{rcgs} $S$, we define the set of
sets of states that a~coalition $A$ can \emph{enforce} in the next
state of the game:
$$force(S,q,A) = \{succ(q,s_A) \mid s_A \text{ is a strategy for } A
\text{ in } S\}.$$

Using the surjective function $m$ we can prove the following lemma,
showing that the ``next time'' strength of any coalition $A$ is the same
in $S$ as it is in $f(S)$.

\begin{lemma}
  \label{lemma:basic}
  For any \acro{rcgs} $S$, and state $q \in Q$ and any coalition $A
  \subseteq \agents$, we have $force(S, A, q) = force(f(S), A, q)$.
\end{lemma}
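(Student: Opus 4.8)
The plan is to establish the set equality by proving the two inclusions separately, in each case reducing the claim to a single per-strategy identity: that for every $A$-strategy $s_A$ in $f(S)$ one has $succ(q,s_A)=succ(q,m(s_A))$, where $m$ is the surjection from Lemma~\ref{lemma:surj}. Granting this identity, the inclusion $force(f(S),A,q)\subseteq force(S,A,q)$ is immediate, since any member of the left-hand side is $succ(q,s_A)$ for some $f(S)$-strategy $s_A$, hence equals $succ(q,m(s_A))$, which is witnessed by the $S$-strategy $m(s_A)$. For the reverse inclusion I would take any $succ(q,p_A)\in force(S,A,q)$, use surjectivity of $m$ to pick an $f(S)$-strategy $s_A$ with $m(s_A)=p_A$, and then read off $succ(q,p_A)=succ(q,m(s_A))=succ(q,s_A)\in force(f(S),A,q)$.

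Everything therefore rests on the identity $succ(q,s_A)=succ(q,m(s_A))$. Since $succ$ depends only on the behaviour at the single state $q$, I would first recall from the definition of $f$ that the \acro{cgs} transition satisfies $\delta'(q,\alpha_1,\dots,\alpha_n)=\delta(q,g(\alpha))$, where $g$ is the counting map sending a complete action tuple $\alpha$ to the complete profile whose $r$-th component records, for each action $i\in[\act(q,r)]$, the number $|\{a\in\agents_{r,q}\mid\alpha_a=i\}|$ of agents in role $r$ choosing $i$. Consequently $succ(q,s_A)=\{\delta(q,g(\alpha))\mid \alpha\text{ a complete tuple agreeing with }s_A(q)\text{ on }A\}$, so it suffices to show that $g$ maps the complete tuples extending $s_A(q)$ exactly onto $ext(q,m(s_A)(q))$, the set of complete profiles $F'\geq m(s_A)(q)$.

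For the easy direction, if a complete tuple $\alpha$ agrees with $s_A(q)$ on $A$ then for every role $r$ and action $i$ one has $g(\alpha)_r(i)=|\{a\in\agents_{r,q}\mid\alpha_a=i\}|\geq|\{a\in A_{r,q}\mid\alpha_a=i\}|=m(s_A)(q)_r(i)$, so $g(\alpha)\geq m(s_A)(q)$ and hence $g(\alpha)\in ext(q,m(s_A)(q))$. The step I expect to be the main obstacle is the converse: given a complete profile $F\geq m(s_A)(q)$, I must realize it as $g(\alpha)$ for some complete tuple $\alpha$ that still agrees with $s_A(q)$ on $A$. Here I would fix the actions of the agents in $A$ according to $s_A(q)$ --- this already contributes $m(s_A)(q)_r(i)$ votes to action $i$ in role $r$ --- and then distribute the remaining agents $\agents_{r,q}\setminus A_{r,q}$ so that action $i$ receives exactly the non-negative deficit $F_r(i)-m(s_A)(q)_r(i)$. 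Such a distribution exists because $\sum_i F_r(i)=|\agents_{r,q}|$ (as $F$ is complete) while $\sum_i m(s_A)(q)_r(i)=|A_{r,q}|$ (as $m(s_A)(q)$ is an $A$-profile), so the deficits sum to exactly $|\agents_{r,q}\setminus A_{r,q}|$, the number of agents left to place; this is the same counting argument already used in the proof of Lemma~\ref{lemma:surj}. The resulting $\alpha$ satisfies $g(\alpha)=F$ by construction and agrees with $s_A(q)$ on $A$.

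Combining the two directions gives that $g$ carries the complete tuples extending $s_A(q)$ onto $ext(q,m(s_A)(q))$, whence $succ(q,s_A)=succ(q,m(s_A))$, and the two inclusions assembled above then yield $force(S,A,q)=force(f(S),A,q)$.
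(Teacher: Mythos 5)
Your proposal is correct and follows essentially the same route as the paper's proof: both reduce the set equality, via the surjectivity of $m$ from Lemma~\ref{lemma:surj}, to the per-strategy identity $succ(S,m(s_A),q)=succ(f(S),s_A,q)$, and both establish the nontrivial direction of that identity by the same counting construction, fixing the $A$-agents' actions according to $s_A(q)$ and distributing the remaining agents to realize the complement votes (your ``deficits'' $F_r(i)-m(s_A)(q)_r(i)$ are exactly the paper's $v'_i=v_i-w_i$). Your write-up is in fact somewhat more explicit than the paper's, which leaves the easy direction and the derivation of the two inclusions to the reader.
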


\begin{proof}
  By definition of $force$ and Lemma \ref{lemma:surj} it is sufficient
  to show that for all $s_A \in \strats{f(S),A}$, we have
  $succ(S,m(s_A),q) = succ(f(S),s_A,q)$. We show $\subseteq$ as
  follows: Assume that $q' \in force(S,m(s_A),q)$. Then there is some
  complete profile $P = \<v_1,\ldots,v_{|R|}\>$, extending
  $m(s_A)(q)$, such that $\delta(q,P) = q'$. Let $m(s_A)(q) =
  \<w_1,\ldots,w_{|R|}\>$ and form $P' = \<v'_1,\ldots,v'_{|R|}\>$
  defined by $v'_i = v_i - w_i$ for all $1 \leq i \leq |R|$. Then each
  $v'_i$ is an $(\agents \setminus A)$-vote for role $i$, meaning that
  the sum of entries in the tuple $v'_i$ is $|(\agents \setminus
  A)_{r,q}|$.  This means that we can define a function $\alpha:
  \agents \to \Nat^+$ such that for all $a \in \agents$, $\alpha(a)
  \in [d_a(q)]$ and for all $a \in A$, $\alpha(a) = s_a(q)$ and for
  every $r \in R$ and every $a \in (\agents \setminus A)$, and every
  $1 \leq j \leq \act(q,r)$, $|\{a \in (\agents \setminus A)_{r,q}
  \mid \alpha(a) = j\}| = v'_r(j)$. Having defined $\alpha$ like this
  it follows by definition of $m$ that for all $1 \leq j \leq
  \act(q,r)$, $|\{a \in A_{r,q} \mid \alpha(a) = j\}| = w_r(j)$. Then
  for all $r \in R$ and all $1 \leq j \leq \act(q,r)$ we have $|\{a
  \in \agents_{q,r} \mid \alpha(a) = j\}| = v_r(j)$. By definition of
  $f(S)$ it follows that $q' = \delta(q,P) = \delta'(q,\alpha)$ so
  that $q' \in force(f(S),s_A,q)$. We conclude that $force(S,f(s_A),q)
  \subseteq force(f(S),s_A,q)$. The direction $\supseteq$ follows
  easily from the definitions of $m$ and $f$.
\end{proof}

We now state and prove the equivalence. 

\begin{theorem}
  \label{thm:basicthm}
  For any \acro{rcgs} $S$, any $\phi$ and any $q \in Q$, we have $S, q
  \models \phi$ iff $f(S), q \models_{\acro{cgs}}\phi$, where $f$ is
  the surjective model-translation function.
\end{theorem}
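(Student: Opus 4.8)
The plan is to induct on the structure of $\phi$, using that $f(S)$ has the very same state set $Q$ and valuation $\pi$ as $S$. The propositional base case is then immediate: $S,q\models p$ iff $p\in\pi(q)$ iff $f(S),q\models_{\acro{cgs}} p$, and the Boolean connectives transfer directly from the induction hypothesis. All the content lies in the two strategic modalities, where the strategies of $S$ and of $f(S)$ are genuinely different objects and must be related through the surjection $m$ of Lemma \ref{lemma:surj} and the coalitional power captured by Lemma \ref{lemma:basic}.

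For $\catld{A}\bigcirc\phi$ I would argue straight from Lemma \ref{lemma:basic}. Since $\{\lambda[1]\mid\lambda\in out(S,s_A,q)\}=succ(S,s_A,q)$, Definition \ref{def:basicsem} gives that $S,q\models\catld{A}\bigcirc\phi$ holds exactly when some $X\in force(S,A,q)$ satisfies $S,q'\models\phi$ for every $q'\in X$, and the analogous characterisation holds for the \acro{cgs} reading of $f(S)$. As $force(S,A,q)=force(f(S),A,q)$ by Lemma \ref{lemma:basic} and the induction hypothesis equates $S,q'\models\phi$ with $f(S),q'\models_{\acro{cgs}}\phi$ at every $q'\in Q$, both existential quantifications range over the same witness sets with the same labelling, so the equivalence drops out.

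The until operator is the main obstacle, as it quantifies over entire infinite computations rather than single steps, and I cannot simply reuse the one-step force identity. The key intermediate fact I would isolate is that corresponding strategies induce identical families of computations, seen as sequences of states: for every $s_A\in\strats{f(S),A}$ and every state $q$,
$$out(f(S),s_A,q)=out(S,m(s_A),q).$$
This strengthens Lemma \ref{lemma:basic}, but its proof in fact delivers the per-state identity $succ(f(S),s_A,q')=succ(S,m(s_A),q')$ at \emph{every} $q'$. Because a member of either $out$ set is precisely a computation obtained by repeatedly and memorylessly choosing a successor from the relevant $succ$ set, matching these sets state by state lets me convert any complete strategy witnessing a given state sequence in one structure into a complete strategy witnessing the same sequence in the other; a short induction on finite prefixes makes this exact, and symmetry yields both inclusions.

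Granting this, the until case closes without further combinatorics. By Definition \ref{def:basicsem}, $S,q\models\catld{A}\phi\,\mathcal{U}\,\phi'$ iff some $s_A\in\strats{S,A}$ forces the until-condition along every $\lambda\in out(S,s_A,q)$. Surjectivity of $m$ (Lemma \ref{lemma:surj}) lets me rewrite the existential over $\strats{S,A}$ as one over $\strats{f(S),A}$, putting $s_A=m(s'_A)$; the displayed identity then replaces $out(S,m(s'_A),q)$ by $out(f(S),s'_A,q)$ as sets of state sequences, and the induction hypothesis replaces each atomic judgement $S,\lambda[i]\models\phi'$ and $S,\lambda[j]\models\phi$ by its $f(S)$-counterpart. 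Reassembled, the resulting statement is exactly $f(S),q\models_{\acro{cgs}}\catld{A}\phi\,\mathcal{U}\,\phi'$, and since every step is a genuine equivalence both directions of the biconditional are secured at once.
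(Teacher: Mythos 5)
Your proof is correct, but for the temporal operators it takes a genuinely different route from the paper's. Both arguments hinge on Lemma \ref{lemma:basic}, and your $\bigcirc$-case coincides with the paper's. The divergence is at $\mathcal{U}$ (and $\Box$): the paper reduces these to the next-time case via the fixed-point characterizations in Equation \ref{eq:fixed}, identifying $true(S,\catld{A}\Box\phi)$ as the maximal set of states closed under ``$\phi$ holds and $A$ can force remaining in the set'', so that only the $force$-equality of Lemma \ref{lemma:basic}, as stated, is ever invoked; you instead bypass fixed points entirely by strengthening Lemma \ref{lemma:basic} to the outcome-set identity $out(f(S),s_A,q)=out(S,m(s_A),q)$ and transporting the path quantification verbatim. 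Your strengthening is legitimate: the proof of Lemma \ref{lemma:basic} really establishes the per-state identity $succ(f(S),s_A,q')=succ(S,m(s_A),q')$ for every $q'$, and since strategies in this paper are positional, a computation induced by a complete strategy in one structure can be replayed in the other by choosing, state by state, a complete extension realizing the same successor (this is your stitching step, and it does go through). What each approach buys: the paper's route needs the lemma only in its stated form, at the price of importing, and claiming easy verification of, the fixed-point equivalences and the existence of their extremal solutions; your route is more elementary and self-contained, arguing straight from Definition \ref{def:basicsem}, at the price of needing the per-state form of the lemma, the stitching construction, and the tacit assumption that $\models_{\acro{cgs}}$ is also read with positional strategies --- the same assumption already built into the paper's definition of $m$, but one that would otherwise require the standard fact that memory adds no power in \acro{atl}. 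One gap in coverage rather than in method: you never treat $\catld{A}\Box\phi$, which in \acro{atl} is a primitive modality, not an abbreviation of $\mathcal{U}$; the paper handles it via the first fixed point, while your argument covers it verbatim once the until-condition is replaced by ``$\phi$ at every position of $\lambda$'', so it costs you a sentence, not an idea.
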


\begin{proof}
  Given a structure $S$, and a formula $\phi$,
  we define $true(S,\phi) = \{q \in Q \mid S,q \models
  \phi\}$. Equivalence of models $S$ and $f(S)$ is now demonstrated by
  showing that the equivalence in next time strength established in
  Lemma \ref{lemma:basic} suffices to conclude that $true(S,\phi) =
  true(f(S),\phi)$ for all $\phi$.

  We prove the theorem by showing that for all $\phi$, we have
  $true(S,\phi) = true(f(S),\phi)$. We use induction on complexity of
  $\phi$. The base case for atomic formulas and the inductive steps
  for Boolean connectives are trivial, while the case of
  $\catld{A}\bigcirc\phi$ is a straightforward application of Lemma
  \ref{lemma:basic}. For the cases of $\catld{A}\Box\phi$ and
  $\catld{A}\phi\mathcal{U}\psi$ we rely on the following fixed point
  characterizations, which are well-known to hold for \acro{atl}, see
  for instance \cite{Jamroga:2009:EYH:1615285.1615287}, and are also easily
  verified against definition \ref{def:basicsem}:
  \begin{gather}\label{eq:fixed}
    \begin{gathered}
      \catld{A}\Box\phi \leftrightarrow \phi \land
      \catld{A}\bigcirc\catld{A}\Box\phi\\
      \catld{A}\phi_{1}\mathcal{U}\phi_{2} \leftrightarrow
      \phi_{2}\lor
      (\phi_{1}\land\catld{A}\bigcirc\catld{A}\phi_{1}\mathcal{U}\phi_{2}
    \end{gathered}
  \end{gather}
  We show the induction step for $\catld{A}\Box\phi$, taking as
  induction hypothesis $true(S,\phi) = true(f(S),\phi)$.  The first
  equivalence above identifies $Q' = true(S,\catld{A}\Box\phi)$ as the
  maximal subset of $Q$ such that $\phi$ is true at every state in
  $Q'$ and such that $A$ can enforce a state in $Q'$ from every state
  in $Q'$, i.e. such that $\forall q \in Q': \exists Q'' \in
  force(q,A): Q'' \subseteq Q'$.  Notice that a unique such set always
  exists. This is clear since the union of two sets satisfying the two
  requirements will itself satisfy them (possibly the empty set). The
  first requirement, namely that $\phi$ is true at all states in $Q'$,
  holds for $S$ iff if holds for $f(S)$ by induction hypothesis. Lemma
  \ref{lemma:basic} states $force(S,q,A) = force(f(S),q,A)$, and this
  implies that also the second requirement holds in $S$ iff it holds
  in $f(S)$. From this we conclude $true(S,\catld{A}\Box\phi) =
  true(f(S),\catld{A}\Box\phi)$ as desired. The case for
  $\catld{A}\phi\mathcal{U}\psi$ is similar, using the second
  equivalence.
\end{proof}

\begin{example}[Sensor networks contd.]
  \label{ex2}
  To further illustrate the use of \acro{atl} interpreted over
  \acro{rcgs}, we provide example formulas that are related to the
  structures shown in Example \ref{ex1}.

  In the structure depicted in Figure \ref{fig:1tier}, if at least $k$
  sensors signal something, $p$ becomes true (e.g.\ the alarm is
  raised). This is expressed by formula $\langle\!\langle
  A\rangle\!\rangle\bigcirc p$ which is satisfied in the structure
  from Figure \ref{fig:1tier}, i.e.\ $H_{1}, q_{0} \vDash
  \langle\!\langle A\rangle\!\rangle\bigcirc p$ whenever $|A\cap
  \roles (q_{0},1)| \geq k$. In Figure \ref{fig:1tiercontroller}, the
  supervisor decides whether signals that indicate $p$ are strong
  enough in order for \emph{him} to signal $q$, e.g.\ raise the
  alarm. In this scenario, the sensors alone cannot raise the alarm,
  hence $H_{2}, q_{0} \not\models \langle\!\langle A\rangle\!\rangle
  \Diamond q$ whenever $A \cap \roles(q_{1},2) = \emptyset$ (which
  means that whenever the coalition $A$ does not include the
  supervisor, $q$ cannot be enforced). On the other hand, $H_{2},
  q_{0} \models \catld{A}\bigcirc\catld{B}\bigcirc q$ whenever $|A
  \cap \roles(q_{0},1)| \geq k$ and $B \cap \roles(q_{1},2) \neq
  \emptyset$ (which means that the coalition of agents without
  a~supervisor can enable the supervisor to take action).
\end{example}

\section{Model checking and the size of models}
\label{sec:mc}
In this section we will see how using roles can lead to a dramatic decrease in
the size of \acro{atl} models. 
We first investigate the size of models in terms of
the number of roles, players and actions, and then we analyze model
checking of \acro{atl} over concurrent game structures with roles.

Given a set of numbers $[a]$ and a number $n$, it is a well-known
combinatorial fact that the number of ways in which to choose $n$
elements from $[a]$, allowing repetitions, is $\frac{(n +
  (a-1))!}{n!(a-1)!}$. Furthermore, this number satisfies the
following two inequalities:\footnote{If this is not clear, remember
  that $n^a$ and $a^n$ are the number of functions $[n]^{[a]}$ and
  $[a]^{[n]}$ respectively. It should not be hard to see that all ways
  in which to choose $n$ elements from $a$ induce non-intersecting
  sets of functions of both types.}
\begin{equation}\label{eq:basic}
  \begin{array}{ccc}\frac{(n + (a-1))!}{n!(a-1)!} \leq a^n &\text{ and }& \frac{(n + (a-1))!}{n!(a-1)!} \leq n^a.\end{array}
\end{equation}
These two inequalities provide us with an upper bound on the
\emph{size} of \acro{rcgs} models that makes it easy to compare their
sizes to that of \acro{cgs} models.  Typically, the size of concurrent
game structures is dominated by the size of the domain of the
transition function. For an \acro{rcgs} and a given state $q \in Q$
this is the number of complete profiles at $q$. To measure it,
remember that every complete profile is an $|R|$-tuple of votes $v_r$,
one for each role $r \in R$. 
Also remember
that a vote $v_r$ for $r \in R$ is an $\act(q,r)$-tuple such that the
sum of entries is $|\agents_{q,r}|$. Equivalently, the vote $v_r$ can be
seen as the number of ways in which we can make $|\agents_{q,r}|$
choices, allowing repetitions, from a set of $\act(q,r)$
alternatives. Looking at it this way, we obtain:
$$|\votep q| = \prod_{r\in
  R}\frac{(|\agents_{q,r}| + (\act(q,r) -
  1))!}{|\agents_{q,r}|!(\act(q,r)-1))!}.
$$ 
  We sum over all $q \in Q$ to obtain what we consider to be the size of
an \acro{rcgs} $S$. In light of Equation~\ref{eq:basic}, it follows
that the size of $S$ is upper bounded by both of the following
expressions.

\begin{equation}\label{eq:size}
  \begin{array}{ccc} \bigO (\sum_{q\in Q}\prod_{r \in R}|\agents_{q,r}|^{\act(q,r)}) & \text{ and }  & \bigO (\sum_{q \in Q}\prod_{r\in R}\act(q,r)^{|\agents_{q,r}|}).
  \end{array}
\end{equation}
We observe that growth in the size of models is polynomial in $a =
max_{q \in Q,r \in R}\act(r,q)$ if $n = |\agents|$ and $|R|$ is fixed,
and polynomial in $p = max_{q \in Q, r \in R}|\agents_{q,r}|$ if $a$ and
$|R|$ are fixed. This identifies a significant potential advantage
arising from introducing roles to the semantics of \acro{atl}. The
size of a \acro{cgs} $M$, when measured in the same way, replacing
complete profiles at $q$ by complete action tuples at $q$, grows
exponentially in the players whenever the players have more than one action. We stress that we are \emph{not} just counting the number of
transitions in our models differently. We do have an additional
parameter, the roles, but this is a new semantic construct
that gives rise to genuinely different semantic structures. We have
established that it is possible to use them to give the semantics of
\acro{atl}, but this does not mean that there is not more to be said
about them. Particularly crucial is the question of model checking
over \acro{rcgs} models.

\subsection{Model checking using roles}

For \acro{atl} there is
a well known model checking algorithm \cite{alur2002alternating}. It does model checking in time linear
in the length of the formula and the size of the model. 
Given a structure $S$, and a formula $\phi$, the standard
model checking algorithm  $mcheck(S,\phi)$ returns the set of
states of $S$ where $\phi$ holds.

\begin{wrapfigure}{r}{0.4\textwidth}
  \label{alg:force}
    \begin{algorithmic}
      \FOR {$F \in \kvotep q A$} \STATE $p \leftarrow true$ \FOR {$F' \in ext(q,F)$} \IF
      {$\delta(q,F') \not \in Q'$} \STATE $p \leftarrow false$
      \ENDIF
      \ENDFOR
      \IF {$p = true$} \RETURN $true$
      \ENDIF
      \ENDFOR
      \RETURN $false$
      \caption{$enforce(S,A,q,Q')$}
    \end{algorithmic}
  \end{wrapfigure}
The algorithm depends on a~function $enforce(S,A,q,Q')$, which 
given a structure $S$, a coalition $A$, a state $q \in Q$ and a set of
states $Q'$ answers true or false depending on
whether or not $A$ can enforce $Q'$ from $q$. 
This is the only part of the standard algorithm
that needs to be modified to accommodate roles.

For all profiles $F \in \kvotep q A$ the $enforce$ algorithm runs through all
complete profiles $F' \in \votep q$ that extend $F$.
It is polynomial in the number of complete profiles,
since for any coalition $A$ and state $q$ we have $|\kvotep q A| \leq
|\votep q|$, meaning that the complexity of $enforce$ is upper bounded
by $|\votep q|^2$.
Given a fixed length formula and a fixed number of states, $enforce$
dominates the running time of $mcheck$.
It follows that model checking of \acro{atl} over
concurrent game structures with roles is polynomial in the size of the
model. We summarize this result.
\begin{proposition}\label{prop:modelcheck}
  Given a \acro{cgs} $S$ and a formula $\phi$, $mcheck(S,\phi)$ takes
  time $\bigO(l e^2)$ where $l$ is the length of $\phi$ and $e =
  \sum\limits_{q \in Q}|\votep q|$ is the total number of transitions in
  $S$
\end{proposition}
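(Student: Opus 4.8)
The plan is to reduce the running-time analysis to the classical bottom-up \acro{atl} model checking algorithm and to isolate the only role-dependent ingredient, namely the subroutine $enforce$. First I would recall the shape of $mcheck(S,\phi)$: it computes $true(S,\psi)$ for every subformula $\psi$ of $\phi$, working from the innermost subformulas outward. Since a formula of length $l$ has $\bigO(l)$ distinct subformulas, it suffices to bound the work spent per subformula and multiply by $\bigO(l)$. For atomic propositions and Boolean connectives the work at a subformula is $\bigO(|Q|)$, and since every $\votep q$ is non-empty we have $|Q| \leq e$, so those cases are dominated by the modal ones. For the modal operators $\catld{A}\bigcirc$, $\catld{A}\Box$ and $\catld{A}\mathcal{U}$ the algorithm is exactly the standard one, with the pre-image test ``can $A$ force the current candidate set from $q$?'' delegated to the role-aware $enforce(S,A,q,Q')$; as we are only analysing running time I take the correctness of this scheme for granted (the test is decided correctly by definition of $force$ and $succ$) and concentrate on its cost.

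Second, I would bound a single call to $enforce$. Inspecting the subroutine, $enforce(S,A,q,Q')$ ranges over the partial profiles $F \in \kvotep q A$ and, for each, over the complete extensions $F' \in ext(q,F) \subseteq \votep q$, doing only constant work per pair (the membership test $\delta(q,F') \in Q'$). Using $|\kvotep q A| \leq |\votep q|$ and $|ext(q,F)| \leq |\votep q|$, one call costs $\bigO(|\votep q|^2)$. Summing a full sweep, one call at each state, then gives $\sum_{q \in Q}|\votep q|^2 \leq \bigl(\sum_{q \in Q}|\votep q|\bigr)^2 = e^2$, so a single pre-image sweep over the whole structure is $\bigO(e^2)$.

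Third, I would handle the fixed-point operators. For $\catld{A}\bigcirc\phi$ one sweep suffices, giving $\bigO(e^2)$ directly. For $\catld{A}\Box\phi$ and $\catld{A}\phi\mathcal{U}\psi$ I would invoke the fixed-point characterizations \eqref{eq:fixed}: the candidate set is refined monotonically, so the iteration stabilizes after at most $|Q|$ steps. The crucial point is to argue, by the standard amortized bookkeeping used for linear-time \acro{atl} model checking, that the total $enforce$ work across \emph{all} iterations of one such fixed point amounts to only a constant number of sweep-equivalents, hence $\bigO(e^2)$ rather than $\bigO(|Q| \cdot e^2)$; concretely, one keeps for each complete profile $F' \in \votep q$ a record of whether its target $\delta(q,F')$ still lies in the candidate set, so that $F'$ is re-examined only when that target changes status. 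Combining the per-subformula bound $\bigO(e^2)$ with the $\bigO(l)$ subformulas yields $\bigO(l) \cdot \bigO(e^2) = \bigO(l e^2)$.

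The main obstacle is exactly this amortized accounting in the third step: the naive estimate ``$|Q|$ iterations, each an $\bigO(e^2)$ sweep'' over-counts by a factor of $|Q|$, and eliminating it requires adapting the classical counter-based pre-image trick to the two-level structure of $enforce$ (partial profile together with its complete extensions), so that the quadratic cost $|\votep q|^2$ incurred at a state is paid essentially once over the whole fixed point rather than once per iteration.
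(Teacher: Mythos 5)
Your proposal is correct, and its skeleton is the same as the paper's: bound one call of $enforce(S,A,q,Q')$ by the number of pairs $(F,F')$ with $F \in \kvotep q A$ and $F' \in ext(q,F)$, which is at most $|\kvotep q A|\cdot|\votep q| \leq |\votep q|^2$, sum over states using $\sum_{q \in Q}|\votep q|^2 \leq e^2$, and multiply by the $\bigO(l)$ subformulas processed by the standard bottom-up algorithm. Where you genuinely go beyond the paper is in the fixed-point cases. The paper's own justification is the informal paragraph preceding the proposition, which hedges with ``given a fixed length formula and a fixed number of states, $enforce$ dominates the running time of $mcheck$'' --- in effect treating $|Q|$ as a constant --- and then states the bound unconditionally; a naive reading (at most $|Q|$ iterations per fixed point, each a fresh $\bigO(e^2)$ sweep) only yields $\bigO(l\,|Q|\,e^2)$, i.e.\ $\bigO(l e^3)$ since $|Q| \leq e$. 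You correctly identify this as the crux and close it with the counter-based amortization that the standard \acro{atl} algorithm \cite{alur2002alternating} already relies on (via linear-time alternating reachability) to get the $\bigO(l e)$ bound over plain \acro{cgs}: in a monotone fixed-point computation the status of the target $\delta(q,F')$ of a complete profile $F'$ changes at most once, and each such change triggers at most $|\kvotep q A|$ updates over the partial profiles that $F'$ extends, so the total $enforce$ work per subformula is $\sum_{q \in Q}|\votep q|\cdot|\kvotep q A| \leq e^2$ rather than $|Q|\cdot e^2$. In short, your argument is not a different route but a completion of the paper's: it supplies, and is right to insist on, exactly the amortization step the paper leaves implicit, without which the proposition does not follow as stated when $|Q|$ is unbounded.
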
Since model checking \acro{atl} over \acro{cgs}s takes only linear
time, $\bigO (l e)$, adding roles apparently makes model checking
harder. On the other hand, the \emph{size} of \acro{cgs} models can be
bigger by an exponential factor, making model checking much easier
after adding roles. In light of the bounds we have on the size of
models, c.f. Equation \ref{eq:size}, we find that as long as the roles
and the actions remain fixed, complexity of model checking is only
polynomial in the number of agents. This is a potentially significant
argument in favor of including roles in the semantics.

Roles should be used at the modeling stage, as they give the modeler
an opportunity for exploiting homogeneity of the system under
consideration. We think that it is reasonable to hypothesize that in
practice, most large scale multi-agent systems that lend themselves well to
modeling by \acro{atl}  exhibit
significant homogeneity.

The question arises as to whether or not using an \acro{rcgs} is
\emph{always} the best choice, or if there are situations when the
losses incurred in the complexity of model checking outweigh the gains
we make in terms of the size of models. We conclude with the following
proposition, also shown in \cite{DyrKazPar12-0}, which states that as
long we use the standard algorithm, model checking any \acro{cgs} $M$
can be done at least as quickly by model checking an \emph{arbitrary}
$S \in f^-(M)$.

\begin{proposition}\label{prop:alwaysbest}
  Given any \acro{cgs}-model $M$ and any formula $\phi$, let
  $c(mcheck(M,\phi))$ denote the complexity of running
  $mcheck(M,\phi)$. We have, for all $S \in f^-(M)$, that complexity
  of running $mcheck(S,\phi)$ is $\bigO (c(mcheck(M,\phi))$
\end{proposition}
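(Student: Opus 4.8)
The plan is to compare the two executions of $mcheck$ \emph{call-by-call} rather than through the aggregate bounds. A naive route would substitute the $\bigO{(l\,e_S^2)}$ estimate of Proposition \ref{prop:modelcheck} for the run on $S$ (with $e_S=\sum_{q\in Q}|\votep q|$) against the standard linear cost $\bigO{(l\,e_M)}$ of the run on $M$ (with $e_M$ the number of transitions of $M$); even knowing $e_S\leq e_M$ this only yields $\bigO{(l\,e_M^2)}$, which is too weak. So first I would exploit that $f(S)=M$ forces $S$ and $M$ to share the same state set $Q$, the same labelling, and corresponding transition structures, and argue that the control flow of $mcheck$ is \emph{identical} on the two inputs. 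Concretely, $mcheck$ recurses on the subformulas of $\phi$ and, for the temporal cases, iterates a fixed point over $Q$ issuing calls $enforce(\cdot,A,q,Q')$. By Theorem \ref{thm:basicthm} we have $true(S,\psi)=true(M,\psi)$ for every subformula $\psi$, and by Lemma \ref{lemma:basic} we have $force(S,q,A)=force(f(S),q,A)$, so every call $enforce(S,A,q,Q')$ returns the same boolean as $enforce(M,A,q,Q')$. Hence the two runs issue the same sequence of $enforce$ calls, on the same triples $(A,q,Q')$, with the same number of fixed-point iterations; the only possible difference is the internal cost of each individual $enforce$ call.

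Next I would bound that internal cost per call and per state. In $S$, the call at $q$ ranges over the $A$-profiles $\kvotep q A$ and, for each, over its complete extensions; since every complete profile extends exactly its own restriction to $A$, the nested loops examine exactly $|\votep q|$ complete profiles, so one call costs $\bigO{(|\votep q|)}$ (and $\bigO{(|\votep q|^2)}$ under the looser count of Proposition \ref{prop:modelcheck}). The analogous call in $M$ examines $\prod_{a\in\agents}d_a(q)$ complete action tuples, with the same per-object work (one $\delta$-lookup and one membership test in $Q'$). The decisive step is the combinatorial inequality of Equation \ref{eq:basic} together with the definition of $f$, giving
\[
|\votep q| = \prod_{r\in R}\frac{(|\agents_{q,r}|+\act(q,r)-1)!}{|\agents_{q,r}|!\,(\act(q,r)-1)!} \leq \prod_{r\in R}\act(q,r)^{|\agents_{q,r}|} = \prod_{a\in\agents}d_a(q),
\]
where the last equality holds because $f$ sets $d_a(q)=\act(q,r)$ for every $a$ in role $r$, so grouping the agents by role turns $\prod_{a}d_a(q)$ into $\prod_{r}\act(q,r)^{|\agents_{q,r}|}$. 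Thus at every state the number of complete objects scanned by $enforce$ in $S$ is at most the number scanned in $M$, so each $enforce$ call in the $S$-run is at most a constant factor more expensive than its counterpart in the $M$-run.

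I would then assemble the two observations: the runs agree on everything except the per-call cost of $enforce$, all remaining bookkeeping (recursion over $\phi$, fixed-point administration over $Q$) being identical since $Q$ and $\phi$ are shared; summing the per-call costs over the common sequence of calls yields $c(mcheck(S,\phi))=\bigO{(c(mcheck(M,\phi)))}$. Since the displayed inequality holds at every $q$ regardless of how the roles of $S$ partition the agents, and any $S\in f^-(M)$ must realise the same action counts $d_a(q)$ under $f$, the bound is uniform over all $S\in f^-(M)$, as required.

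The step I expect to be the main obstacle is the first one: making precise that the two executions run in \emph{lock-step}, so that cost can legitimately be compared call-by-call. This is exactly where Theorem \ref{thm:basicthm} and Lemma \ref{lemma:basic} do the essential work — they guarantee that the sets $Q'$ supplied to $enforce$ and the booleans it returns coincide across $S$ and $M$ — which is what rules out the useless aggregate comparison and lets the single per-state inequality $|\votep q|\leq\prod_{a}d_a(q)$ carry the whole argument.
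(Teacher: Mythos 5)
Your overall architecture matches the paper's: reduce the comparison to corresponding calls of $enforce$, argue everything else in the two runs is identical, and then prove a per-state inequality between the number of objects scanned in $S$ and in $M=f(S)$. Your lock-step argument (via Theorem \ref{thm:basicthm} and Lemma \ref{lemma:basic}) is in fact more careful than the paper's one-line version of that step. However, there is a genuine gap in your cost accounting for $enforce$ on $S$. You claim that ``every complete profile extends exactly its own restriction to $A$,'' so that the nested loops visit exactly $|\votep q|$ complete profiles. This is false in the vote representation, precisely because votes are anonymous: a complete profile has no unique restriction to $A$ and can extend many distinct $A$-profiles. Take one role with two actions and four agents, two of which are in $A$: the complete vote $\langle 2,2\rangle$ extends all three $A$-votes $\langle 2,0\rangle$, $\langle 1,1\rangle$, $\langle 0,2\rangle$. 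The true number of iterations of the nested loops is $\sum_{F\in \kvotep q A}|ext(q,F)| = |\kvotep q A|\cdot|\kvotep q {(\agents\setminus A)}|$, since each $A$-profile has exactly one complete extension per $(\agents\setminus A)$-profile; in the example this is $3\cdot 3=9$, strictly larger than $|\votep q|=5$. So the quantity you bound is an undercount of the actual cost, and your inequality $|\votep q|\leq\prod_{a\in\agents}d_a(q)$, while true, bounds the wrong thing. (Your intuition is imported from the \acro{cgs} side, where a complete action tuple \emph{does} have a unique restriction to $A$; that is exactly the identity information that roles erase, and it is why Proposition \ref{prop:modelcheck} needs the quadratic bound $|\votep q|^2$ rather than $|\votep q|$.)

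The repair is precisely the paper's displayed computation: bound the two factors separately by Equation \ref{eq:basic}, namely $|\kvotep q A|\leq\prod_{r\in R}\act(q,r)^{|A_{r,q}|}$ and $|\kvotep q {(\agents\setminus A)}|\leq\prod_{r\in R}\act(q,r)^{|\agents_{r,q}|-|A_{r,q}|}$, multiply them, and regroup the right-hand side by agents (using $d_a(q)=\act(q,r)$ for each $a$ in role $r$ at $q$) to obtain $\prod_{a\in A}d_a(q)\cdot\prod_{a\in\agents\setminus A}d_a(q)=\prod_{a\in\agents}d_a(q)$, which is the number of pairs examined by $enforce$ on $M$. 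With that substitution the rest of your argument goes through unchanged; without it, the key per-state inequality is unproven.
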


\begin{proof}
  It is clear that for any $S \in f^-(M)$, running $mcheck(S,\phi)$
  and $mcheck(M,\phi)$, a difference in overall complexity can arise
  only from a difference in the complexity of $enforce$. So we compare
  the complexity of $enforce(S,A,q,Q'')$ and $enforce(M,A,q,Q'')$ for
  some arbitrary $q \in Q$, $Q'' \subseteq Q$. The complexity in both
  cases involves passing through all complete extensions of all
  strategies for $A$ at $q$. The sizes of these sets can be
  compared as follows, the first inequality is an instance of Equation
  \ref{eq:basic} and the equalities follow from definition of $f$ and
  the fact that $M = f(S)$.
  \begin{align*}
    \prod_{r \in R} \left(\frac{(|A_{r,q}| + (\act(r,q)- 1))!}{|A_{r,q}|!(\act(r,q)-1)!}\right) \times& \prod_{r \in R}\left(\frac{((|\agents_{q,r}| - |A_{r,q}|) + (\act(r,q) - 1))!}{(|\agents_{q,r}| - |A_{r,q}|)!(\act(r,q)-1)!}\right)  \\ \leq& \left(\prod_{r \in R} \act(r,q)^{|A_{r,q}|} \times \prod_{r \in R} \act(r,q)^{|\agents_{q,r}| - |A_{r,q}|}\right) \\
    =& \prod_{r \in R} \left(\prod_{a \in A_{r,q}} \act(r,q)\right) \times \prod_{r\in R} \left(\prod_{a \in \agents_{a,r} \setminus A_{r,q}} \act(r,q)\right) \\
    =& \left(\prod_{a \in A}d_a(q) \times \prod_{a \in \agents
        \setminus
        A}d_a(q)\right) = \prod_{a \in \agents}d_a(q)\\
  \end{align*}
  We started with the number of profiles (transitions) we need to
  inspect when running $enforce$ on $S$ at $q$, and ended with the
  number of action tuples (transitions) we need to inspect when
  running $enforce$ on $M = f(S)$. Since we showed the first to be
  smaller or equal to the latter and the execution of all other
  elements of $mcheck$ are identical between $S$ and $M$, the claim
  follows.
\end{proof}

\section{Conclusions, related and future work}
\label{sec:concl}

In this paper we have described a~new type of semantics for the
strategic logic \acro{atl}. We have provided illustrating examples and
argued that although in principle model checking \acro{atl}
interpreted over concurrent game structures with roles is harder than
the standard approach, it is still polynomial and can generate
exponentially smaller models. We believe this provides evidence that
concurrent game structures with roles are an interesting semantics for
\acro{atl}, and should be investigated further.

Relating our work to ideas already present in the literature we find
it somewhat similar to the idea of exploiting symmetry in model
checking, as investigated by Sistla and Godefroid
\cite{SisGod0407-0}. However, our approach is different, since we look
at agent symmetries in \acro{atl} as the basis of a~new
semantics. When it comes to work related directly to strategic logics,
we find no similar ideas present, hence concluding that our approach
is indeed novel.

For future work we plan on investigating the homogeneous aspect of our
`roles' in more depth. We are currently working on a~derivative of
\acro{atl} with a~different language that will fully exploit the role
based semantics.

\paragraph{Acknowledgments:} We thank Pål Grønås Drange, Valentin
Goranko and Alessio Lomuscio for helpful comments. Piotr Kaźmierczak's
work was supported by the Research Council of Norway project 194521 (FORMGRID).

\bibliographystyle{eptcs}
\bibliography{references}

\end{document}